\let\th@plain\relax \makeatother
\setlist[enumerate,1]{label={\roman*)}}
\tikzset{> =stealth}
\tikzset{vertex/.style={circle,fill=black,minimum size=3mm,inner sep=0mm}}
\newcommand{\addQEDstyle}[2]{\AtBeginEnvironment{#1}{\pushQED{\qed}\renewcommand{\qedsymbol}{#2}}\AtEndEnvironment{#1}{\popQED}}
\theoremstyle{plain}
\newtheorem{theorem}{Theorem}[section]
\newtheorem{proposition}[theorem]{Proposition}
\theoremstyle{definition}
\theoremstyle{remark}
\newtheorem{remark}[theorem]{Remark}
\renewcommand{\epsilon}{\varepsilon}
\renewcommand{\phi}{\varphi}
\mathchardef\mhyphen="2D
\renewcommand{\O}{\mathcal{O}}
\renewcommand{\P}{\mathsf{P}}
\newcommand{\NP}{\mathsf{NP}}
\newcommand{\coNP}{\mathsf{coNP}}
\newcommand{\UP}{\mathsf{UP}}
\newcommand{\coUP}{\mathsf{coUP}}
\newcommand{\NL}{\mathsf{NL}}
\newcommand{\PL}{\mathsf{PL}}
\newcommand{\CC}{\mathsf{CC}}
\newcommand{\AC}{\mathsf{AC}}
\newcommand{\parityL}{{\oplus}\mathsf{L}}
\title{A simple lower bound for ARRIVAL}
\author[G. Manuell]{Graham Manuell}
\address{Centre for Mathematics, University of Coimbra, Coimbra, Portugal}
\email{graham@manuell.me}
\thanks{The author acknowledges financial support from the Centre for Mathematics of the University of Coimbra (UIDB/00324/2020, funded by the Portuguese Government through FCT/MCTES)}
\date{August 2021}
\keywords{switching game, reachability, pseudorandom walk, complexity class, comparator circuit}
\begin{document}

\begin{abstract}
 The \textsc{Arrival} problem introduced by Dohrau, Gärtner, Kohler, Matoušek and Welzl concerns a train moving on a directed graph proceeding along outward edges according to the position of `switches' at each vertex, which in turn are toggled whenever the train passes through them. The problem asks whether the train every reaches a designated destination vertex.
 It is known that $\textsc{Arrival}$ is contained in $\UP \cap \coUP$, while the previously best published lower bound is that it is $\NL$-hard.
 In this note we provide a simple reduction to the $\textsc{Digicomp}_\mathsf{EXP}$ problem considered by Aaronson. It follows in particular that \textsc{Arrival} is both $\CC$-hard and $\PL$-hard.
\end{abstract}

\maketitle
\thispagestyle{empty}

\section{Introduction}

A \emph{switch graph} consists of a set of vertices $V$ and two endomorphisms $s_0, s_1\colon V \to V$ and can be thought of as a directed graph with directed edges from each $v$ to $s_0(v)$ and from $v$ to $s_1(v)$.
Given a switch graph $G = (V, s_0, s_1)$, we imagine a train starting a some vertex $s$ and traversing the graph is the following way. Each vertex of the graph contains a switch initialised to state $0$. At each time step, if the train is at a vertex $v$ and the switch at $v$ is in state $i \in \{0,1\}$, the train moves from a vertex $v$ to $s_i(v)$ and the state of the switch at $v$ is toggled. 
The \textsc{Arrival} problem \cite{dohrau2017arrival} asks whether the train every reaches a specified destination vertex $t$.

It is shown in \cite{dohrau2017arrival} that \textsc{Arrival} is in $\NP \cap \coNP$ and this is improved to $\UP \cap \coUP$ in \cite{gartner2018CPS}. Recently, an algorithm has been given that solves it in $2^{\O(\sqrt{n} \log{n})}$ time \cite{gartner2021subexponential}.
On the other hand, the only published lower bound is given in \cite{fearnley2021reachability} (for a formally slightly more general game) where $\NL$-hardness is proved.

However, in \cite{aaronson2014digicomp} Aaronson studies a highly related problem. The Digi-Comp II was a mechanical toy computer where small balls rolls down an incline and are deflected by toggles that divert their path, which conversely causes the toggles to be kicked back into a different setting. The effect is that the balls behave exactly the train in \textsc{Arrival} with the graph $G$ restricted to be acyclic.

Explicitly, let us define a switch graph to be \emph{acyclic} if there are no cycles in the corresponding directed graph aside from self-loops.
Then the \textsc{Digicomp} problem asks for an acyclic switch graph $G = (V, s_0, s_1)$, a starting vertex $s$, a destination vertex $t$ and a number of balls $T$ encoded in unary, whether after $T$ balls are released sequentially from $s$ any ever reach $t$. (Technically, to match the original definition there are should some further restrictions, such as $s_0(t) = s_1(t) = t$, but our version is easily seen to be equivalent.)
Aaronson shows that this problem is in fact $\CC$-complete.

Recall that $\CC$ is the class of problems which are log-space reducible to the evaluation of a circuit built about of comparator gates (which send $(x,y)$ to $(x \wedge y, x \vee y)$) without implicit fan-out (see \cite{mayr1992fanout,cook2018comparator}). It is known that $\CC$ contains $\NL$.

We note it is not completely immediate that \textsc{Arrival} is $\CC$-hard: while the \textsc{Digicomp} problem is almost \textsc{Arrival} restricted to acyclic graphs, this is not quite true, since unlike \textsc{Arrival}, $\textsc{Digicomp}$ involves `multiple trains'.
Nonetheless, it is still true that \textsc{Digicomp} can be reduced \textsc{Arrival} as we show below.

In fact, we can say say more. The problem $\textsc{Digicomp}_\mathsf{EXP}$ is defined in \cite{aaronson2014digicomp} in the same way as \textsc{Digicomp} except the number of balls $T$ is encoded in \emph{binary} --- that is, the number of balls used can be exponential in the input size. This problem is still in $\P$, but it might be strictly harder than \textsc{Digicomp}. We will show that $\textsc{Digicomp}_\mathsf{EXP}$ can be reduced \textsc{Arrival} in logarithmic space.
Consequently, \textsc{Arrival} is both $\CC$-hard and $\PL$-hard.

I do not claim any particular novelty here; every nontrivial piece of the argument used to derive this result is already known to others. However, they do not been appear to have been put together explicitly before and all the literature on \textsc{Arrival} only mentions the $\NL$-hardness result. The aim of this paper is to make the stronger lower bounds more widely known.

\section{Results}

As noted in the original paper \cite{dohrau2017arrival}, it is not difficult to construct switch graphs which act as binary counters.
A 4-bit counter is given in the following diagram, where solid lines represent the $s_0$ edges and dashed lines represent the $s_1$ edges.

\begin{center}
\begin{tikzpicture}[node distance=2cm,auto]
  \coordinate[vertex,label=above:$C$] (C0);
  \foreach \i in {1,2,3,4} {
    \coordinate[vertex,right of=C\the\numexpr\i-1\relax] (C\i);
    \draw[dashed,->] (C\the\numexpr\i-1\relax) to (C\i);
    \draw[bend right,->] (C\i) to (C0);
  }
  \coordinate[vertex,right of=C4,label=above:$B$] (B);
  \draw[dashed,->] (C4) to (B);
  \coordinate[vertex,below of=C0,label=left:$A$] (A);
  \draw[->] (C0) to (A);
  \coordinate[above left=1cm of C0] (D);
  \draw[dotted,->] (D) to (C0);
\end{tikzpicture}
\end{center}

This can be used to pass down the path at $A$ 16 times and then the path at $B$ once (before possibly repeating the whole sequence over again). Here we have omitted the full path starting at $A$, which would of course need to eventually loop back to the starting vertex of the counter at $C$.

In general, given some counter similar to that above that counts to $T$, we can make $B$ connect back to $C$ via a solid line and to a new final state $B'$ via a dotted line.
This will cause the counter to run twice before reaching $B'$ and hence we obtain a counter that counts to $2T$.

On the other hand, if we connect $B$ to $A$ instead of looping directly back to $C$, we will end up visiting $A$ one additional time before starting the counter again.
Thus, the new counter will count up to $2T+1$.

Now by repeatedly applying these two techniques, we can build a counter to count to any number based on its binary expansion.
For instance, the following counter visits $A$ $22 = 10110_2$ times before reaching $B$.

\begin{center}
\begin{tikzpicture}[node distance=2cm,auto]
  \coordinate[vertex,label=above:$C$] (C0);
  \foreach \i in {1,2,3,4} {
    \coordinate[vertex,right of=C\the\numexpr\i-1\relax] (C\i);
    \draw[dashed,->] (C\the\numexpr\i-1\relax) to (C\i);
  }
  
  \coordinate[vertex,right of=C4,label=above:$B$] (B);
  \draw[dashed,->] (C4) to (B);
  \coordinate[vertex,below of=C0,label=left:$A$] (A);
  \draw[->] (C0) to (A);
  \coordinate[above left=1cm of C0] (D);
  \draw[dotted,->] (D) to (C0);
  
  \draw[bend right,->] (C1) to (C0);
  \draw[->] (C2) to (A);
  \draw[->] (C3) to (A);
  \draw[bend right,->] (C4) to (C0);
\end{tikzpicture}
\end{center}

These counters only require $\lfloor\log T\rfloor + 1$ nodes to count up to $T$.
We now arrive at the following result, where we use a counter to run through an acyclic graph multiple times and hence simulate multiple trains.
\begin{proposition}
 ${\normalfont\textsc{Digicomp}}_\mathsf{EXP}$ is a $\AC^0$-reducible to {\normalfont\textsc{Arrival}}.
\end{proposition}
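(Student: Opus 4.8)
The plan is to have the single train of an \textsc{Arrival} instance simulate the $T$ balls of a ${\normalfont\textsc{Digicomp}}_\mathsf{EXP}$ instance released one after another, using a binary counter of the kind built above both to re-launch a fresh ball each time the previous one has settled and to halt the process after exactly $T$ balls; because a counter for $T$ needs only $\lfloor\log T\rfloor+1$ nodes, the switch graph produced has polynomial size even though $T$ is written in binary. Before the construction I would record the one place where acyclicity enters: in an acyclic switch graph the train, started from any vertex with the switches in any configuration, reaches a \emph{true sink} --- a vertex $v$ with $s_0(v)=s_1(v)=v$ --- after finitely many steps. This is a topological rank argument: assign to each vertex a rank that strictly increases along every non-loop edge (possible since deleting the self-loops leaves a DAG); a vertex that is not a true sink has a non-loop out-edge under at least one switch position, so within two steps the train leaves it for a vertex of strictly larger rank and, there being no non-trivial cycles, never comes back to it; as the rank is bounded, the train must become stuck at a true sink. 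Thus a ``ball'' is exactly a run of the train through $G$ from $s$ until it settles at a true sink. I would also note that we may assume $s_0(t)=s_1(t)=t$ --- replacing the out-edges of $t$ by self-loops affects only trajectories that have already reached $t$ and so changes neither problem's answer --- and that $T=0$ is a trivial instance.

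Given $(G,s,t,T)$ with $T$ in binary, I would build the \textsc{Arrival} instance $G'$ thus: take $G$; redirect both out-edges of every true sink $v\neq t$ to a single new vertex $C_0$; attach at $C_0$ a counter gadget on fresh vertices $C_0,\dots,C_{\lfloor\log T\rfloor}$ assembled from the binary expansion of $T$ exactly as in the figures above, except that the external ``action'' node $A$ of that counter is taken to be $s$ itself (so the counter edges that would point to $A$ point to $s$, and the counter's ``return from $A$'' is the train's arrival back at $C_0$ from a redirected sink) and the counter's exit node is wired into a fresh dead-end sink $\bar t$; finally declare $C_0$ the start vertex and $t$ the destination. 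The graph $G'$ then has $|V|+\O(\log T)$ vertices.

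For correctness I would argue as follows. The counter gadget only ever inspects and toggles its own switches $C_0,\dots,C_{\lfloor\log T\rfloor}$, and those are untouched whenever the train is off in $G$, while conversely a pass through $G$ touches no counter switch; so, by the counter analysis recalled above, once the train enters at $C_0$ it is dispatched to $s$ exactly $T$ times, and after the $T$-th dispatch it is sent to $\bar t$, where it loops forever. Moreover the $i$-th dispatch reproduces verbatim the trajectory of the $i$-th ball of ${\normalfont\textsc{Digicomp}}_\mathsf{EXP}$: the train enters $G$ at $s$ with exactly the switch configuration left by the previous $i-1$ dispatches (that is, by the first $i-1$ balls), and, $G$ being acyclic, it never crosses a redirected sink-edge until it has settled. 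Hence the train reaches $t$ if and only if one of the first $T$ balls does, so \textsc{Arrival} and ${\normalfont\textsc{Digicomp}}_\mathsf{EXP}$ return the same answer on the two instances. Finally, the reduction is in $\AC^0$: detecting and redirecting a sink is the local test ``$s_0(v)=v$ and $s_1(v)=v$ and $v\neq t$'', and the counter is specified bit by bit, the out-edges at each $C_j$ depending only on one bit of $T$ together with the ($\AC^0$-computable) position of its leading bit.

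The step I expect to need the most care is this correctness argument --- in particular, making precise that the counter keeps behaving as a counter even though its increment step is now a stateful, non-idempotent excursion through $G$ rather than a fixed subroutine, and that these excursions are faithful copies of the successive balls. The rank argument, the explicit counter construction, and the $\AC^0$ bookkeeping are all routine by comparison.
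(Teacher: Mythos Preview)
Your proposal is correct and follows essentially the same construction as the paper: attach a binary counter for $T$ whose ``action'' output feeds $s$, redirect the sinks of $G$ back to the counter's entry, and send the counter's exit to a fresh absorbing vertex, with the counter's entry as the new start and $t$ still the destination. Your write-up supplies considerably more detail than the paper's (the rank/termination argument, the explicit correctness discussion separating counter state from $G$-state, and the $\AC^0$ bookkeeping), all of which the paper leaves implicit.
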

\begin{proof}
 Consider the $\textsc{Digicomp}_\mathsf{EXP}$ problem for an acyclic switch graph $G$, starting vertex $s$, destination vertex $t$ and a number of balls $T$.
 We construct a counter up to $T$ as described above where the output $A$ connects to the graph $G$ at the vertex $s$ and the output at $B$ is a new vertex $F$ with $s_0(F) = s_1(F) = F$.
 The `leaf' vertices of $G$ (i.e.\ those vertices $v$ such that $s_0(v) = s_1(v) = v$) have their outputs modified so that they connect back to the counter at $C$.
 The new starting vertex is $C$ and the new destination vertex is still $t$.
\end{proof}

\begin{remark}
 We have seen it is easy to repeat an operation exponentially many times in \textsc{Arrival} and then do something else. This might make us hope that we can simulate the succinct 0-player reachability switching game of \cite{fearnley2021reachability}, which was shown there to be $\P$-hard. However, no simple modification of our approach is able to achieve this.
 
 The problem is that to simulate Boolean circuits as in \cite{fearnley2021reachability} it is necessary to put the entire program in a big loop as we have done here and have the train pass through the circuit multiple times in order to feed the inputs into the circuit. However, then if we ever use a counter to make the train go down one path $2^n$ times before going along another path, by the time we reach the counter again it will have been reset to its initial position and the train will go down the first path yet again instead of starting to repeat the second path.
 
 We could try to avoid this by making sure the second path never loops back to this counter again, but this is not compatible with the global loop.
 We also note that it \emph{is}, of course, possible to alternate between paths $A$ and $B$ so that in the end both have been traversed $2^n$ times, but now it is not clear how to make logic gates work.
 
 Finally, we note in passing that it is possible to simulate the non-succinct 0-player reachability switching game by encoding each node of high out-degree using a binary tree, but this takes exponentially many nodes and so is not helpful for the succinct version.
 \hfill$\triangle$
\end{remark}

Since $\textsc{Digicomp}_\mathsf{EXP}$ is clearly harder than \textsc{Digicomp}, which is $\CC$-complete, it now follows that \textsc{Arrival} is $\CC$-hard.
The precise difficulty of $\textsc{Digicomp}_\mathsf{EXP}$ is not known, but in a comment at \cite{aaronson2014digicomp} Itai Bar-Natan proves that $\textsc{Digicomp}_\mathsf{EXP}$ is $\PL$-hard.
Thus, \textsc{Arrival} is $\PL$-hard too.
We reproduce the argument here for completeness.
\begin{proposition}
 ${\normalfont\textsc{Digicomp}}_\mathsf{EXP}$ is $\PL$-hard.
\end{proposition}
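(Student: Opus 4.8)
The plan is to exhibit a logspace reduction from a $\PL$-complete problem to $\textsc{Digicomp}_\mathsf{EXP}$. The natural candidate is the problem of deciding, given an integer matrix $M$ and vector, whether a particular entry of $M^k$ (or of an iterated matrix-vector product) exceeds a threshold — equivalently, whether a majority of length-$\ell$ paths in a directed graph from a source to a sink exists, which is the canonical $\PL$-complete ``counting paths'' problem. The key observation is that a Digi-Comp II with $T$ balls is naturally a \emph{counting device}: when a ball passes through a vertex $v$ whose switch is in state $0$, it is sent along $s_0$ and the switch flips; so after many balls the switch at $v$ routes balls alternately, splitting the incoming stream of balls into two nearly-equal halves (differing by at most one) going to $s_0(v)$ and $s_1(v)$. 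Thus a Digi-Comp acts as a network that repeatedly halves ball-counts, and the parity/high-order bits of the resulting counts at the leaves encode arithmetic on the binary representation of $T$.

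Concretely, I would first recall (or cite Aaronson) the precise semantics: if $N$ balls enter a vertex over the course of the run and its switch starts at $0$, then $\lceil N/2\rceil$ go to $s_0(v)$ and $\lfloor N/2\rfloor$ go to $s_1(v)$. Next I would build, for a given instance of the $\PL$-complete problem, an acyclic switch graph together with a binary-encoded $T$ so that the number of balls arriving at the target vertex $t$ is a prescribed $\{0,1\}$-valued function of the input — using $T$ itself (chosen as an appropriate power of two, or a sum of powers of two built by the doubling/incrementing trick already described in the text) to inject a known bit-pattern, and using the halving behaviour at internal vertices to perform the additions and comparisons needed to compute the majority-of-paths predicate. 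The acyclicity requirement is met because the circuit being simulated can be laid out in layers, and the self-loops at leaves absorb balls exactly as in the $\textsc{Digicomp}$ definition. Finally I would check that the whole construction — the graph, the routing, and the binary number $T$ — is produced by a logspace transducer, which is routine since everything is local and the sizes are polynomial.

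The main obstacle is arranging the halving gadgets so that they genuinely simulate the arithmetic of a $\PL$ computation rather than just \emph{some} polynomial-time computation: one must control the off-by-one errors introduced by the floors and ceilings (they can accumulate across layers) and ensure that the final ``reaches $t$ or not'' bit reflects the sign of the relevant determinant/threshold comparison and nothing else. The cleanest route is probably to reduce from the ``iterated matrix product positivity'' formulation of $\PL$ and show that a single ball-stream of size $T=2^m$ pushed through a layered halving network computes, in the high-order bits of the leaf counts, exactly the entries of the matrix power, with the low-order noise provably irrelevant to the threshold test. I expect verifying that error-control bound, and packaging it as a clean logspace reduction, to be the part that needs genuine care; the rest is bookkeeping already implicit in the counter constructions above.
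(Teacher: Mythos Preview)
Your outline is essentially the same reduction as the paper's --- path counting in a DAG, layered network, $T$ a power of two, halving at each node --- but you are treating as the hard part something that, with one extra normalisation, disappears entirely. The paper first forces every vertex to have out-degree at most $2$, then \emph{unrolls} the DAG into layers $(v,i)$ for $i=0,\dots,n-1$, with an edge $(u,i)\to(v,i+1)$ whenever $u\to v$ in $G$ (and $(t,i)\to(t,i+1)$). Now every source-to-target path has length exactly $n$. Releasing $T=2^n$ balls at $(s,0)$, one proves by induction on $i$ that the number of balls reaching $(v,i)$ is exactly $2^{n-i}\cdot c(v,i)$, where $c(v,i)$ is the number of paths from $(s,0)$ to $(v,i)$. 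The point is that $2^{n-i}$ is even for $i<n$, so every halving is exact: there are \emph{no} floor/ceiling errors to bound, and no ``low-order noise'' to argue away. At layer $n-1$ the target receives precisely the path count $c(t,n-1)$.

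The only remaining piece, which your sketch leaves vague, is converting ``at least $k$ balls arrive'' into a single reachability bit. The paper does this with a small acyclic counter gadget (the Digi-Comp analogue of the binary counter from the text, with the back-edges replaced by self-loops) that absorbs the first $k-1$ balls and routes any further ball to the destination vertex $D$. Both the layering and the counter are local, logspace-computable constructions, so the reduction is logspace. In short: your instinct to use $T=2^m$ through a layered halving network is exactly right; what you are missing is that the layering already makes the arithmetic exact, so the ``genuine care'' you anticipate for error control is unnecessary.
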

\begin{proof}
 The problem of determining if there are at least $k$ paths from $s$ to $t$ in an directed acyclic graph $G$ is $\PL$-complete.
 By recursively splitting the vertices in the graph we may assume the out-degree of each vertex is at most 2 without loss of generality.
 
 We can now define a new graph $G'$ to have a vertex $(v,i)$ for each $v \in G$ and $i \in \{0,\dots,n-1\}$, where $n$ is the number of vertices in $G$, and an edge from $(u,i)$ to $(v,i+1)$ whenever $u \ne t$ and there is an edge from $u$ to $v$ in $G$ and also when $u = v = t$. Note that the number of paths from $(s,0)$ to $(t,n-1)$ in $G'$ is then precisely equal to the number of paths from $s$ to $t$ in $G$. Moreover, out-degrees of vertices in $G'$ are also at most $2$. This construction can be done in logarithmic space.
 
 Finally, we construct a switch graph from $G'$ by arbitrarily choosing one of the edges out of each vertex to be the $s_0$ edge and the other to be the $s_1$ in the case the out degree is 2. When the degree is less than 2 we make the remaining edges out of $v$ point to a new vertex $F$ which satisfies $s_0(F) = s_1(F) = F$. The vertex $(t,n-1)$ is a special case, which we discuss later. It is not hard to see that if we start $2^n$ balls at $(s,0)$ then the number of balls that reach the vertex $(v,i)$ is equal to $2^{n-i} c(v,i)$ where $c(v,i)$ is the number of paths from $(s,0)$ to $(v,i)$ in $G'$. Thus, $c(t,n-1)$ balls arrive at $(t,n-1)$, which is also equal to the number of paths from $s$ to $t$ in $G$.
 
 Finally, we connect $(t,n-1)$ to a counter that sends the first $k-1$ balls to $F$ and the remaining balls to a new destination vertex $D$. This counter is the \textsc{Digicomp} analogue of the counter for \textsc{Arrival} we described above. The difference is that whenever we would have looped back to the start of the counter we instead just connect the vertex to itself. (When this counter is interpreted in $\textsc{Arrival}$ the counter will connect then back to the start of vertex $(s,0)$.) A ball will now arrive at $D$ precisely if there are at least $k$ paths from $s$ to $t$ in $G$, as required. Finally, note that the construction of the switch graph can be done in logarithmic space and so we are done.
\end{proof}

It is also notable that at the end of the simulation of the switch graph above, the position of the switch at $(t,n-1)$ contains the answer to whether the number of paths from $s$ to $t$ is odd or even, which is a $\parityL$-complete problem. So if we could use this to influence the location of a ball somehow (i.e.\ to cause a later ball to fall into a position no ball would have reached before), this would show that \textsc{Arrival} is $\parityL$-hard.
However, there does not appear to be any obvious way to do this.

\bibliographystyle{abbrv}
\bibliography{references}

\end{document}